\documentclass{ifacconf}

\usepackage{graphicx}      
\usepackage{natbib}        
\usepackage{amssymb} 
\usepackage{amsmath} 
\usepackage{algorithm}
\usepackage[ruled,vlined,algo2e]{algorithm2e}
\usepackage{soul}
\usepackage[noend]{algpseudocode}
\usepackage{epsfig} 
\usepackage{textcomp}
\usepackage{lipsum}
\usepackage{arydshln}
\usepackage{etoolbox}
\usepackage{ mathrsfs }
\usepackage{gensymb}
\usepackage{color}
\usepackage{mathtools}
\usepackage{systeme}
\usepackage{amsfonts}
\usepackage{empheq}
\usepackage{float}
\usepackage{caption}
\usepackage{subcaption}
\usepackage{bbold}
\usepackage[outdir=./]{epstopdf}
\usepackage{enumitem}

\newtheorem{definition}{Definition}
\newtheorem{lemma}{Lemma}
\newtheorem{assumption}{Assumption}
\newtheorem{theorem}{Theorem}{}
\newtheorem{proof}{Proof}{}
{}
\newtheorem{remark}{Remark}{}
\newtheorem{problem}{Problem}{}
\begin{document}
\begin{frontmatter}

\title{Safety monitoring under stealthy sensor injection attacks using reachable sets} 

\author[First]{Cédric Escudero} 
\author[Second]{Michelle S. Chong} 
\author[First]{Paolo Massioni}
\author[First]{Eric Zamaï}

\address[First]{Univ Lyon, INSA Lyon, Université Claude Bernard Lyon 1, Ecole Centrale de Lyon, CNRS, Ampère, UMR5005, 69621 Villeurbanne, France (e-mail: cedric.escudero@insa-lyon.fr, paolo.massioni@insa-lyon.fr, eric.zamai@insa-lyon.fr)}
\address[Second]{Eindhoven University of Technology, (e-mail: m.s.t.chong@tue.nl)}
   
\begin{abstract}                
Stealthy sensor injection attacks are serious threats for industrial plants as they can compromise the plant's integrity without being detected by traditional fault detectors. In this manuscript, we study the possibility of revealing the presence of such attacks by monitoring only the control input. This approach consists in computing an ellipsoidal bound of the input reachable set. When the control input does not belong to this set, this means that a stealthy sensor injection attack is driving the plant to critical states. The problem of finding this ellipsoidal bound is posed as a convex optimization problem (convex cost with Linear Matrix Inequalities constraints). Our monitoring approach is tested in simulation.
\end{abstract}

\begin{keyword}
Secure control, sensor attack, stealthy attack, deception attack, detection
\end{keyword}

\end{frontmatter}

\section{Introduction}
Information and communication technologies are progressively integrated in many industrial applications. In particular, the control of plants is targeted by this integration to improve the control performance and the awareness of their surrounding environment in making appropriate decisions. As a result, plants are now controlled by embedded computer devices over communication networks, commonly called Network Control Systems (NCSs) (\cite{Lee_CPS}). Beyond the benefits made possible by information and communication technologies, new issues have emerged. A critical one is recently highlighted by academics as well as industrial stakeholders: the security issue (\cite{Cardenas_2009}). NCSs are exposed to vulnerabilities (e.g. for remote control or maintenance of the embedded computer system via the internet or cellular networks), and can be exploited by adversaries to launch cyber attacks (\cite{Firoozjaei2022}). In particular, \textit{deception} attacks have attracted the attention of the control engineering community. These attacks aim at degrading the control performance by tampering with system's signals (sensing and control). This includes spoofing, false-data injection, and replay attacks (\cite{Teixeira}) that can have severe consequences on the plant.

The secure control literature (\cite{Chong_2019, annual_rev_Sandberg_2021}) has rapidly grown to come up with different methods to detect deception attacks. Some works have proposed quantifying the impact of deception attacks on the plant by means of set-theoretic methods (\cite{Milosevic_2020, Liu_2021, Zhang_2022}). Set-theoretic methods involve computing sets to encapsulate the reachable states of dynamical systems; this is the so-called reachability analysis. Other works, based on  reachability analysis, have proposed methods to mitigate the impact of stealthy attacks. Stealthy attacks are attacks that aim to damage the plant while avoiding dectection (e.g. fault detectors). \cite{MurguiaAutomatica} and \cite{Li_2022b} propose synthesis methods for the control block (i.e. the controller and the fault detector) to minimize the impact of stealthy attacks. Besides redesigning the control block, some works design monitoring approaches to reveal the presence of attacks. Among those works, \cite{Azzam_2022} proposes a predictive approach to check the safety of the plant in the presence of potential stealthy attacks. \cite{Yang_2021} uses an observer-based estimator using a bank of unknown input observers to estimate the plant states and detect attacks. \cite{Hu_2019} reveals the presence of stealthy attacks by analyzing the residual distribution.

In this manuscript, we focus on \textit{stealthy sensor injection attacks}, a class of deception attacks that injects malicious sensing signals into the network while escaping detection by the fault detector. We assume an adversary is capable of manipulating sensor measurements by compromising the communication network between the plant and the controller. The closest work relative to such attacks is in \cite{MurguiaAutomatica} where the authors model the attack and synthesize the control block to mitigate the impact of attacks on the plant. Here instead, we want to reveal the presence of stealthy sensor injection attacks by monitoring the control signals only. Indeed, sensor injection attacks will, at some time, affect the control signals to drive the plant into critical states, which will damage the plant integrity. Moreover, compared to estimator-based approaches, our approach will only require one signal to monitor (i.e., the control signal), hence reducing the vulnerabilities of the monitoring approach. To do so, we propose a set-theoretic method based on reachable sets. It consists first in finding the state reachable set when the system is subject to stealthy sensor attacks and the plant state are constrained by a given safe set. The safe set is the set of plant states where its safe and proper operation is guaranteed. The resulting state reachable set encompasses all the trajectories of the plant states under stealthy attacks that are safe. Then, we search for an input reachable set that guarantees the plant states belong to the computed state reachable set for all stealthy attacks. Hence, the monitoring approach consists in verifying whether the control signal does not belong to this input reachable set, which we can then verify that the control system is under stealthy attacks due to our computationally efficient outer-approximation of the input reachable set. The computation of tight bounds on reachable sets is challenging in general (see \cite{kurzhanski2000ellipsoidal}), especially when online computation is paramount, as is the case for critical control systems.

The rest of this manuscript is organized as follows. Section~\ref{sec:problem} formulates the research problem we want to address. Section~\ref{sec:tools} provides the necessary background about computing the ellipsoidal bound of reachable sets. Our approach of monitoring the input signals is presented in Section~\ref{sec:result_sec}. Lastly in Section~\ref{sec:example}, we apply our results on a three-tank system to illustrate the performance of our monitoring approach.


\noindent
\emph{\textbf{Notation:}} The symbol $\mathbb{R}$ stands for the real numbers, $\mathbb{R}^{n \times m}$ is the set of real ${n \times m}$ matrices, and $\mathbb{R}_{>0}$ ($\mathbb{R}_{\geq 0}$) denotes the set of positive (non-negative) real numbers. Matrix $A^\top$ indicates the transpose of matrix $A$ and diag($a_1,...,a_n$) corresponds to a diagonal matrix with diagonal elements $a_1,...,a_n$.  The identity matrix of dimension $n$ is denoted by $I_n$, and $\mathbf{0}$ is a matrix of only zeros of appropriate dimensions. The notation $A \succeq 0$ (resp. $A \preceq 0$) indicates that the matrix $A$ is positive (resp. negative) semidefinite, i.e., all the eigenvalues of the symmetric matrix $A$ are positive (resp. negative) or equal to zero, whereas the notation \(A \succ 0\) (resp. \(A \prec 0\)) indicates the positive (resp. negative) definiteness, i.e., all the eigenvalues are strictly positive (resp. negative). The notation $\mathcal{E}_\varphi(\Phi,\bar{\phi})$ stands for an ellipsoidal set of dimension $n_\phi$ with shape matrix $\Phi \in \mathbb{R}^{n_\varphi \times n_\varphi},\, \Phi = \Phi^\top  \succ 0$ and centered at $\bar{\phi}$, i.e.,
 $\mathcal{E}_\varphi(\Phi, \bar{\phi}):=\{\varphi \in \mathbb{R}^n\, |\,(\varphi - \bar{\phi})^\top \Phi (\varphi - \bar{\phi})  \leqslant 1  \}$; if no center $\bar{\phi}$ is specified in the ellipsoid notation this means that the ellipsoid is centred at $0$, i.e.  $\mathcal{E}_\varphi(\Phi):=\{\varphi \in \mathbb{R}^n\, |\,\varphi^\top \Phi \varphi  \leqslant 1  \}$.
 
\section{Problem formulation} \label{sec:problem}
In this section, we introduce the class of systems and attacks under study. Then, we introduce at a high-level our proposed solution to detect those attacks based on input monitoring, and formulate the research problem addressed in this manuscript.

\subsection{System dynamics}
We consider a system  $\Sigma_p$ with the following dynamics
\begin{align} \label{eq:plant}
    \dot{x}_p(t) & = A_p x_p(t) + B_p u(t) \nonumber \\
    y(t) & = C_p x_p(t) + D_p u(t),
\end{align}
with time $t \in \mathbb{R}_{>0}$, plant state $x_p \in \mathbb{R}^{n_p}$, sensor measurement $y\in\mathbb{R}^{l}$, control input $u \in \mathbb{R}^{m}$, and system matrices $A_p, \, B_p, \, C_p, \, D_p$ with appropriate dimensions.

The system description in \eqref{eq:plant} representing the plant is part of a closed-loop as illustrated in Figure~\ref{fig:setup}. The system receives the control action $u$ and transmits the sensor measurement $y$ through a public/unsecured communication network.

\subsection{Adversarial capabilities}
In this manuscript, we consider an adversary who can inject sensor data from the unsecured network. That is, it can add some signals $\delta y(t)$ to the true sensor measurement $y(t)$, i.e. $\delta y(t)$ models the corruption of sensor measurements. The adversary can compromise up to $s$ sensor measurements $y(t)$, with $s \in \{1, \ldots, l\}$. We also denote by $\Gamma$, an attacker's selection matrix to choose how the additive signals $\delta y(t)$ affects the true sensor measurements as proposed in \cite{MurguiaAutomatica}. The received sensor measurement, i.e., the true sensor measurement with the additional signals controlled by the adversary, denoted $\tilde{y}(t)$ takes the following form:
\begin{align} \label{eq:attack}
    \tilde{y}(t) & = y(t) + \Gamma \delta y(t),
\end{align}
where $\delta y(t) \in \mathbb{R}^s$ denotes additive sensor measurement.

Notice that the attacker's selection matrix $\Gamma$ depends on the attacker and defender capabilities. If the defender secures some output channels, for instance with encryption, any data injection attack will not affect the received decrypted data. Similarly, the attacker wants to reduce its attack cost, then it might select only a few channels to perform its attack. The worst-case scenario for the defender is when $\Gamma = I_l$, which is when the attacker attacks all sensors.

\begin{figure}[t!]
\centering
\includegraphics[width=0.3\textwidth]{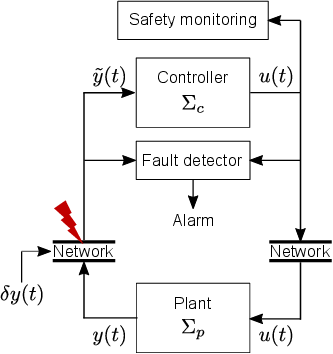}
\caption{Control system in the presence of sensor corruption.}
\label{fig:setup}
\end{figure}

\subsection{Dynamic output feedback controller}
A controller $\Sigma_c$ receives the sensor measurement $\tilde{y}$ through the unsecured communication network and computes control actions $u$ which are transmitted to the plant.

The control input to the plant $u:\mathbb{R}_{\geq 0} \to \mathbb{R}^{m}$, affected by the sensor attacks, is given by a controller $\Sigma_c$ with the following dynamics
\begin{align} \label{eq:control}
    \dot{x}_{c}(t) & = A_c x_c(t) + B_c \tilde{y}(t) \nonumber  \\
    u(t) & = C_c x_c(t) + D_c \tilde{y}(t),
\end{align}
with controller state $x_c\in\mathbb{R}^{n_c}$.

We state all the assumptions about the plant and controller models below.

\begin{assumption}[Modelling assumptions] \label{assum:model}
The plant \eqref{eq:plant} and controller \eqref{eq:control} models satisfy the following:
\begin{enumerate}[label=(\roman*)]
    \item The matrix $I_{l \times l}-D_p D_c$ is invertible such that the closed-loop system \eqref{eq:plant} and \eqref{eq:control} is well-posed.
    \item The plant \eqref{eq:plant} interconnected with the controller \eqref{eq:control} is asymptotically stable. \hfill $\Box$
\end{enumerate}
     
\end{assumption}
When $D_p = \mathbf{0}$, condition (i) in Assumption \ref{assum:model} holds.

\subsection{Fault detector}
Traditionally, control systems are equipped with fault detectors to detect the occurrence of faults in the plant. Most model-based fault detectors consist in computing the residual, that is the difference between the sensor measurement and the estimated measurement. Here, to estimate the plant state $x_p$, we use the following state estimator:
\begin{align} \label{eq:filter}
    \dot{\hat{x}}(t) & = A_p \hat{x}(t) + B_p u(t) + L(\tilde{y}(t) - \hat{y}(t)) \nonumber \\
    \hat{y}(t) & = C_p \hat{x}(t) + D_p u(t),
\end{align}
with the estimated plant state $\hat{x}\in\mathbb{R}^{n_{p}}$, and gain matrix $L \in \mathbb{R}^{n_{p} \times l}$.

\noindent Let $e(t)$ denote the state estimation error defined as $e(t) := x_p(t) - \hat{x}(t)$, and $r(t)$ denote the residual defined as $r(t):=\tilde{y}(t)-\hat{y}(t)$. From the plant and controller models in \eqref{eq:plant}, \eqref{eq:control}, the detector's estimation error and the residual evolve as follows:
\begin{align} \label{eq:fault_detector}
    \dot{e}(t) & = (A_p-L C_p) e(t) - L \Gamma \delta y(t) \nonumber  \\
    r(t) & = C_p e(t) + \Gamma \delta y(t),
\end{align}

\begin{assumption}[Modelling assumptions] \label{assum:fd}
The fault detector \eqref{eq:fault_detector} model satisfies the following:
\begin{enumerate}[label=(\roman*)]
    \item The pair ($A_p$, $C_p$) is observable such that there exists an $L$ of appropriate dimensions such that ($A_p-L C_p$) is Hurwitz.  \hfill $\Box$
\end{enumerate}   
\end{assumption}
Therefore, under Assumption \ref{assum:fd}, when there are no sensor attacks ($\delta_y(t)=0$, for all $t\geq 0$), the detector's estimation error $e$ and residual $r$ will converge to the origin. Thereby, indicating that there is indeed no fault.

In fault detection, the objective is to decide between the two following hypotheses:
\begin{enumerate}
\item $\mathcal{H}_0$: the system is normal (without faults/attacks)
\item $\mathcal{H}_1$: the system is faulty
\end{enumerate}

We consider here a fault detector that raises an alarm if $r(t)^\top \Pi r(t) > 1$ at some time $t$, with $\Pi \succ 0$. This means that we are under the normal mode when $r(t)^\top \Pi r(t) \leq 1, \, \forall \, t \in\mathbb{R}_{>0}$.

\subsection{Stealthiness and attack definition}
In this manuscript, we consider an adversary who wants to remain stealthy with respect to the fault detector in \eqref{eq:fault_detector}. That is, it wants to launch \textit{stealthy} sensor injection attacks that aim to avoid raising an alarm in the fault detector, i.e., being in the normal mode $\mathcal{H}_0$. Stealthy attacks are constrained attacks in the sense that the adversary must carefully choose the attack signal $\delta y(t)$ such that $r(t)^\top \Pi r(t) \le 1,\, \forall t \ge 0$. Stealthy attacks often cannot lead to quick in time and high impact damage to the plant, but are long term attacks (e.g. causing the rapid aging of equipment). We define the stealthy set $\mathcal{E}_r(\Pi)$ as
\begin{align} \label{eq:stealthy}
r(t)^\top \Pi r(t) \leq 1 \Leftrightarrow r(t) \in \mathcal{E}_r(\Pi)
\end{align}

If \eqref{eq:stealthy} is satisfied for any attack signal $\delta y(t)$, then no alarms will be raised in the fault detector, that is sensor injection attacks are stealthy.

From the residual dynamics in \eqref{eq:fault_detector}, it yields that $\delta y(t) = \Gamma^{+}(r(t) - C_p e(t))$, where $\Gamma^{+}$ denotes the Moore-Penrose inverse of $\Gamma$ (see \cite{MurguiaAutomatica} for more details). We can now state our closed-loop system from plant and controller dynamics in \eqref{eq:plant}, \eqref{eq:control} together with state estimation error of the fault detector in \eqref{eq:fault_detector} by taking the extended state $\zeta:=[x_p^\top, x_c^\top, e^\top]^\top$, with $\zeta \in \mathbb{R}^n$ where $n = n_p + n_c + n_p$:
\begin{align} 
    \dot{\zeta}(t) & = A \zeta(t) + B r(t) \label{eq:closed_loop} \\
    u(t) &= E \zeta(t) + F r(t) \label{eq:closed_loop_input}
\end{align}
with $A, \, B, \, E, \, F$ defined in \eqref{eq:ext_matrices} and $\Lambda:=(\textit{I}_{l}-D_p D_c)^{-1}$.

\begin{figure*}[!t]
\begin{align} \label{eq:ext_matrices}
\nonumber A&=\begin{bmatrix}
A_p + B_pD_c\Lambda C_p & B_pC_c + B_pD_c\Lambda D_pC_c & -B_p D_c \Gamma \Gamma^{+} C_p - B_p D_c \Lambda D_p D_c \Gamma \Gamma^{+} C_p\\ B_c \Lambda C_p & A_c + B_c \Lambda D_p C_c & -B_c \Gamma \Gamma^{+} C_p - B_c \Lambda D_p D_c \Gamma \Gamma^{+} C_p\\ \mathbf{0} & \mathbf{0} & A_p-LC_p + L \Gamma \Gamma^{+} C_p
\end{bmatrix}, \,
F=\begin{bmatrix}
 (D_c \Gamma + D_c \Lambda D_p D_c \Gamma) \Gamma^{+}
 \end{bmatrix}, \\
B&=\begin{bmatrix}
 B_pD_c\Gamma\Gamma^{+}+B_pD_c\Lambda D_pD_c\Gamma\Gamma^{+} \\ B_c \Gamma \Gamma^{+} + B_c \Lambda D_pD_c \Gamma \Gamma^{+} \\ - L \Gamma \Gamma^{+}
 \end{bmatrix}, \,\,\,\,\,\,\,\,\,\,\,\,\,\,\,\,\,\,\,\,\,\,\,\,\,\,\,\,\,\,\,\,\,\,\,\,
E=\begin{bmatrix}
 D_c \Lambda C_p & C_c + D_c \Lambda D_p C_c & -(D_c \Gamma + D_c \Lambda D_p D_c \Gamma) \Gamma^{+} C_p
 \end{bmatrix}. 
\end{align}
\end{figure*}

The setup is illustrated in Figure \ref{fig:setup}.

We denote the solution to the closed-loop system with initial condition $\zeta(t_0)$ and sensor corruption $\delta y$ by $\zeta(t;\zeta(t_0),$ $\delta y)$. In the absence of the measurement corruption $\delta y$, i.e., $\delta y(t)=0$ for all $t\geq t_0$, we call system \eqref{eq:closed_loop} our nominal closed-loop system. Under our modelling assumptions (Assumption \ref{assum:model}), the origin is an equilibrium point of our nominal closed-loop system \eqref{eq:closed_loop}.

After having described the system under study with the class of attacks the system might face, we now need a degradation metric to define the safety level under stealthy sensor injection attacks. To this end, we define the notion of safe sets (\cite{Escudero_ECC22}).

\begin{definition}[Safe set $\mathcal{X}_s$] \label{def:safe_set}
The safe set $\mathcal{X}_s \subseteq \mathbb{R}^{n_p}$ for system in \eqref{eq:plant} is the set of states $x_p \in \mathcal{X}_s$ where the safe and proper operation of the system is guaranteed.
\end{definition}

Safe sets exclude, by Definition~\ref{def:safe_set}, critical states of the plant \eqref{eq:plant}. A critical state is a state of the plant that, if reached, compromises the integrity of the system. For instance, the over-pressure in gas pipelines, or the null-distance/negative distance between two vehicles leading to collision are critical states. We can now formally define stealthy sensor injection attacks.

\begin{definition}[Stealthy sensor injection attacks]\label{def:sensor_attack}
Attacks \, that tamper with sensor measurement by injecting signals $\delta y(t)$ to true sensor measurements, $y(t)$, and aim to degrade the operation of the system dynamics in \eqref{eq:plant} by pushing trajectories outside the safe set $\mathcal{X}_s$ while keeping the residuals trajectories inside the stealthy set $\mathcal{E}_r$ for stealthiness.
\end{definition}

Notice that stealthy sensor injection attacks defined in Definition~\ref{def:sensor_attack} only exist if some critical states can be reached while the fault detector does not raise an alarm.

In the absence of measurement corruption $\delta y(t)$, we assume that the controller $\Sigma_c$ in \eqref{eq:control} is designed such that the state of the plant \eqref{eq:plant} evolves within a known safe state set $\mathcal{X}_{s}$, which is compact.
\begin{assumption}[Safe set $\mathcal{X}_{s}$] \label{assum:safe_set}
Given a safe set $\mathcal{X}_{s}\subset \mathbb{R}^{n_p}$ which is compact, the state of the plant \eqref{eq:plant} in the nominal closed-loop system \eqref{eq:closed_loop} (with $\delta y(t)=0$ for all $t \geq t_0$) satisfies $x_p(t,x_p(t_0))\in \mathcal{X}_{s}\subset \mathbb{R}^{n_p}$, for all $t\geq t_0$ and all initial conditions $x_p(t_0)\in\mathbb{R}^{n_{p}}$. Further, the equilibrium of the nominal closed-loop system satisfies $0\in\mathcal{X}_{s}$. \hfill $\Box$
\end{assumption}

\subsection{Safety monitoring} \label{sec:pb}
We propose to detect the presence of stealthy sensor injection attacks $\delta y$ which has not been detected by the fault detector (stealthy), when we know the plant, the controller, the fault-detector models in \eqref{eq:plant}, \eqref{eq:control}, \eqref{eq:filter}, the stealthy set \eqref{eq:stealthy}, and the safe set $\mathcal{X}_s$ defined in Definition~\ref{def:sensor_attack}, while only having access to the input $u(t)$ that might be affected by a corrupted sensor signal $\delta y$. This goal is called \textit{safety monitoring} under stealthy sensor corruption.

The approach proposed in this manuscript is to find an input set $\mathcal{U}$, in which if the control input $u(t)$ does not belong to this set then this means that a stealthy sensor injection attack is driving the plant to critical states. We can now define our research problem as follows.

\begin{problem}
Given the closed-loop system in \eqref{eq:closed_loop}, the stealthy set $\mathcal{E}_r$ in \eqref{eq:stealthy}, and the safe set $\mathcal{X}_s$ defined in Definition~\ref{def:safe_set}, find an input set $\mathcal{U}$ such that if the control input $u(t)$ does not belong to this set $\mathcal{U}$ at some time $t \ge t_0$, i.e., $u(t) \notin \mathcal{U}$, then the system trajectories are not contained in $\mathcal{X}_s$ at some $t \ge t_0$ for all stealthy sensor injection attacks defined in Definition~\ref{def:sensor_attack}.
\end{problem}

\section{Preliminary tool} \label{sec:tools}

\subsection{Stealthy state reachable set}
First, we introduce the definition of the stealthy state reachable set of the closed-loop system \eqref{eq:closed_loop}. This set will be used later on to find an input set $\mathcal{U}$ to solve Problem~1.

\begin{definition} [Stealthy state reachable set]
The stealthy state reachable set $\mathcal{R}_\zeta(t)$ at time $t \in \mathbb{R}_{>0}$ from initial condition $\zeta(t_0) \in \mathbb{R}^n$ is the set of extended state $\zeta(t)$ that satisfy the extended differential equations \eqref{eq:closed_loop}, over all residuals $r(t)$ satisfying $r(t) \in \mathcal{E}_r(\Pi)$ to guarantee the attack stealthiness, i.e.,

\begin{equation}\label{reachset1}
\mathcal{R}_{\zeta}(t):= \left\{ \zeta(t) \left|
\begin{split}
&\zeta(t_0) \in \mathbb{R}^{n},\\
&\zeta(t) \hspace{1mm}\text{satisfies \eqref{eq:closed_loop}},\\
&\text{and }r(t) \in \mathcal{E}_r(\Pi).
\end{split}
\right.
\right\}.
\end{equation}
\end{definition}
Let $\mathcal{R}_\zeta(\infty)$ denote the asymptotic stealthy state reachable set, that is the ultimate bound on $\mathcal{R}_\zeta(t)$, i.e., $\mathcal{R}_\zeta(\infty) := \lim_{t\to\infty} \mathcal{R}_\zeta(t)$.

\begin{remark}
Because $r(t)$ is bounded ($r(t) \in \mathcal{E}_r(\Pi)$), then the asymptotic stealthy state reachable set $\mathcal{R}_\zeta(\infty)$ is compact if $A$ in \eqref{eq:closed_loop} is Hurwitz.
\end{remark}

\subsection{Ellipsoidal bound on $\mathcal{R}_\zeta(\infty)$}
Because the computation of the asymptotic stealthy state reachable set $\mathcal{R}_\zeta(\infty)$ is not tractable, we propose to rely on an outer ellipsoidal approximation $\mathcal{E}_\zeta(Q)$ of $\mathcal{R}_\zeta(\infty)$, i.e., $\mathcal{R}_\zeta(\infty) \subseteq \mathcal{E}_\zeta(Q)$. For the sake of clarity, we will refer to $\mathcal{E}_\zeta(Q)$ as an ellipsoidal bound on $\mathcal{R}_\zeta(\infty)$. To find this ellipsoidal bound, we propose searching for an invariant set for the dynamical system \eqref{eq:closed_loop}, defined as follows.

\begin{definition} [Ellipsoidal bound $\mathcal{E}_\zeta(Q)$]
For $Q \succ 0$, the ellipsoidal set $\mathcal{E}_\zeta(Q)$ is invariant for the dynamical system in \eqref{eq:closed_loop}, if for all initial states $\zeta(t_0) \in \mathcal{E}_\zeta(Q)$, and for all $r(t) \in \mathcal{E}_r(\Pi)$, the trajectories of $\zeta(t)$ in \eqref{eq:closed_loop} satisfy $\zeta(t) \in \mathcal{E}_\zeta(Q), \, \forall t \ge t_0$.
\end{definition}

\cite{Escudero_IET2022} provides sufficient conditions for ellipsoidal sets $\mathcal{E}_\zeta(Q)$ to be invariant for a class of LTI systems as in \eqref{eq:closed_loop} and for some states $\zeta(t)$ that are constrained by a given ellipsoidal set $\mathcal{E}_\varphi(\Phi, \bar{\phi})$, i.e., $\zeta(t) \in \mathcal{E}_\varphi(\Phi, \bar{\varphi})$, with $\Phi \succeq 0$. The method searches for a Lyapunov-like function $V(\zeta) = \zeta^\top Q \zeta$ using Linear Matrix Inequalities (LMIs) (\cite{boyd1994linear}).

We now state the preliminary tool used to find invariant ellipsoidal sets for the closed-loop system in \eqref{eq:closed_loop} with $r(t) \in \mathcal{E}_r(\Pi)$, $\zeta(t) \in \mathcal{E}_\varphi(\Phi, \bar{\varphi})$, $\forall t \ge t_0$.

\begin{lemma}[Invariant Ellipsoidal Set]
Consider the closed-loop system as in \eqref{eq:closed_loop}. If there exist matrix $Q \in \mathbb{R}^{n \times n}$ and constants $\alpha,\,\beta,\,\lambda \in \mathbb{R}_{\geq 0}$ satisfying the following inequalities:
\begin{equation} \label{eq:thun1}
-H - \alpha J - \beta K - \lambda L \succeq 0,
\end{equation}
\begin{equation} \label{eq:thun2}
Q  \succ 0,
\end{equation}
with
\begin{align} \nonumber
H = &\begin{bmatrix}
A^\top Q + Q A\,  & \mathbf{0}\, & Q B\\
\ast\, &  \mathbf{0}\, & \mathbf{0}\\
\ast\, &  \ast\, & \mathbf{0}
\end{bmatrix},\,\,
J = \begin{bmatrix} \label{eq:lem_N}
Q\, & \mathbf{0}\,  & \mathbf{0}\\
\ast\, & -1\, & \mathbf{0}\\
\ast\, & \ast\, & \mathbf{0}
\end{bmatrix},\\
\nonumber
K = &\begin{bmatrix}
\mathbf{0}\, & \mathbf{0}\, & \mathbf{0}\\
\ast\, & 1\, & \mathbf{0}\\
\ast\, & \ast\, & -\Pi
\end{bmatrix},\,\,
L = \begin{bmatrix}
-\Phi\, & \Phi \bar{\varphi}\, & \mathbf{0}\\
\ast\, & 1-\bar{\varphi}^\top \Phi \bar{\varphi}\, & \mathbf{0}\\
\ast\, & \ast\, & \mathbf{0}
\end{bmatrix};
\end{align}
\noindent then,
$\zeta(t_0) ^\top Q \zeta(t_0) \leqslant 1 \Rightarrow
\zeta(t) ^\top Q \zeta(t) \leqslant 1$, for all $t \ge t_0$, $r(t) \in \mathcal{E}_r(\Pi)$, and $\zeta(t) \in \mathcal{E}_\varphi(\Phi, \bar{\varphi})$.
\label{lem:invar}
\end{lemma}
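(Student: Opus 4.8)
The strategy is the standard S-procedure argument for invariance of ellipsoids. I would introduce the augmented vector $\eta(t) := [\zeta(t)^\top,\, 1,\, r(t)^\top]^\top$, so that each of the four matrices $H, J, K, L$ is the Gram-type matrix of a quadratic form in $\eta$. Specifically, I would first verify the identities $\eta^\top H \eta = \tfrac{d}{dt}\big(\zeta^\top Q \zeta\big)$ along trajectories of \eqref{eq:closed_loop} (using $\dot\zeta = A\zeta + Br$ so that $\tfrac{d}{dt}(\zeta^\top Q\zeta) = \zeta^\top(A^\top Q + QA)\zeta + 2\zeta^\top Q B r$), $\eta^\top J \eta = \zeta^\top Q \zeta - 1$, $\eta^\top K \eta = 1 - r^\top \Pi r$, and $\eta^\top L \eta = 1 - (\zeta - \bar\varphi)^\top \Phi (\zeta - \bar\varphi)$ (expanding the last as $-\zeta^\top\Phi\zeta + 2\zeta^\top\Phi\bar\varphi + 1 - \bar\varphi^\top\Phi\bar\varphi$, which matches the block entries of $L$). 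These are routine block-multiplication checks that I would present compactly.

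Next, the core S-procedure step: multiply the LMI \eqref{eq:thun1} on the left and right by $\eta^\top$ and $\eta$. Since $-H - \alpha J - \beta K - \lambda L \succeq 0$, this yields, for every trajectory point,
\begin{equation*}
-\tfrac{d}{dt}(\zeta^\top Q \zeta) - \alpha(\zeta^\top Q\zeta - 1) - \beta(1 - r^\top\Pi r) - \lambda\big(1 - (\zeta-\bar\varphi)^\top\Phi(\zeta-\bar\varphi)\big) \ge 0 ,
\end{equation*}
i.e.
\begin{equation*}
\tfrac{d}{dt}(\zeta^\top Q \zeta) \le -\alpha(\zeta^\top Q\zeta - 1) + \beta(r^\top\Pi r - 1) + \lambda\big((\zeta-\bar\varphi)^\top\Phi(\zeta-\bar\varphi) - 1\big).
\end{equation*}
Now use the standing hypotheses of the conclusion: $r(t)\in\mathcal{E}_r(\Pi)$ gives $r^\top\Pi r - 1 \le 0$, and $\zeta(t)\in\mathcal{E}_\varphi(\Phi,\bar\varphi)$ gives $(\zeta-\bar\varphi)^\top\Phi(\zeta-\bar\varphi) - 1 \le 0$; together with $\beta,\lambda \ge 0$ these two terms are non-positive, so $\tfrac{d}{dt}(\zeta^\top Q\zeta) \le -\alpha(\zeta^\top Q\zeta - 1)$.

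Finally I would close the argument with a comparison/invariance lemma on the scalar function $W(t) := \zeta(t)^\top Q\zeta(t)$. From $\dot W \le -\alpha(W - 1)$ and $W(t_0) \le 1$: if $\alpha > 0$, Grönwall gives $W(t) - 1 \le (W(t_0)-1)e^{-\alpha(t-t_0)} \le 0$; if $\alpha = 0$, then $\dot W \le 0$ directly gives $W(t) \le W(t_0) \le 1$. Either way $\zeta(t)^\top Q\zeta(t) \le 1$ for all $t \ge t_0$, which is the claim. The main subtlety — and the only place care is needed — is that the bound $\dot W \le -\alpha(W-1)$ is only valid \emph{while} the state constraint $\zeta(t)\in\mathcal{E}_\varphi(\Phi,\bar\varphi)$ holds; since here that constraint is taken as a hypothesis on the trajectory (as stated in the lemma), this is not an obstacle, but I would remark that in the intended application one must separately ensure $\mathcal{E}_\zeta(Q)\subseteq\mathcal{E}_\varphi(\Phi,\bar\varphi)$ so that the derived invariance of $\mathcal{E}_\zeta(Q)$ is self-consistent. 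I would also note the sign convention on $H$: because \eqref{eq:thun1} is written with $-H$, the quadratic form being bounded is $+\tfrac{d}{dt}(\zeta^\top Q\zeta)$, which is exactly what makes the inequality point the right way.
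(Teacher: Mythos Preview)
Your proposal is correct and follows essentially the same S-procedure argument as the paper: form the augmented vector $[\zeta^\top,1,r^\top]$, multiply \eqref{eq:thun1} on both sides, and read off that $\dot V$ is nonpositive whenever the constraint inequalities hold. Your write-up is in fact more careful than the paper's---you make the Gr\"onwall step explicit (including the $\alpha=0$ case) and flag the self-consistency issue $\mathcal{E}_\zeta(Q)\subseteq\mathcal{E}_\varphi(\Phi,\bar\varphi)$---whereas the paper simply states the S-procedure implications and concludes invariance in one line.
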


\begin{proof}
Consider first \eqref{eq:thun1}; left and right multiply by $[\zeta(t)^\top, 1, r(t)^\top ]^\top$, and consider $\alpha J$, $\beta K$ and $\lambda L$  as S-procedure terms by positive multipliers $\alpha$, $\beta$ and $\lambda$; this implies:

\begin{equation}
 [\zeta^\top, 1, r^\top] \,H\, [\zeta^\top, 1, r^\top]^\top= \dot{V}(\zeta)  \leqslant 0.
\end{equation}
when
\begin{equation}
 [\zeta^\top, 1, r^\top] \, J \, [\zeta^\top, 1, r^\top]^\top = V(\zeta)-1 \geqslant 0 \Leftrightarrow V(\zeta) \geqslant 1.
\end{equation}
\begin{equation}
 [\zeta^\top, 1, r^\top] \, K \, [\zeta^\top, 1, r^\top]^\top  \geqslant 0 \Leftrightarrow r(t) \in  \mathcal{E}_r(\Pi).
\end{equation}
\begin{equation}
 [\zeta^\top, 1, r^\top] \, L \, [\zeta^\top, 1, r^\top]^\top  \geqslant 0 \Leftrightarrow \zeta(t) \in  \mathcal{E}_\varphi(\Phi, \bar{\varphi}).
\end{equation}
This means that the value of $V(\zeta)$ can only increase under the stated constraints, i.e., $V(\zeta(t_0))\leqslant 1 \Rightarrow  V(\zeta(t))\leqslant 1$ $\forall t \ge t_0$, which concludes the proof.
\end{proof}
\section{Solution to problem 1} \label{sec:result_sec}
In this section, we propose a mathematical framework, built around Lemma~\ref{lem:invar}, to find an input set $\mathcal{U}$ solving Problem~1.

Similarly to the stealthy state reachable set, we define the stealthy input reachable set $\mathcal{R}_u(t)$.

\begin{definition} [Stealthy input reachable set]
The stealthy input reachable set $\mathcal{R}_u(t)$ at time $t \in \mathbb{R}_{>0}$ from initial condition $u(t_0) \in \mathbb{R}^m$ is the set of input $u(t)$ that satisfy the equations \eqref{eq:closed_loop_input}, over all $\zeta(t)$ satisfying the differential equations \eqref{eq:closed_loop} and over all residuals $r(t)$ satisfying $r(t) \in \mathcal{E}_r(\Pi)$ to guarantee the attack's stealthiness, i.e.,
\begin{equation}\label{reachset1}
\mathcal{R}_u(t):= \left\{ u(t) \left|
\begin{split}
&u(t_0) \in \mathbb{R}^{m},\\
&u(t) \hspace{1mm}\text{satisfies \eqref{eq:closed_loop_input}},\\
&\text{and }\zeta(t) \hspace{1mm}\text{satisfies \eqref{eq:closed_loop}},\\
&\text{and }r(t) \in \mathcal{E}_r(\Pi).
\end{split}
\right.
\right\}.
\end{equation}
\end{definition}

Let $\mathcal{R}_u(\infty)$ denote the asymptotic stealthy input reachable set, that is the ultimate bound on $\mathcal{R}_u(t)$, i.e., $\mathcal{R}_u(\infty) := \lim_{t\to\infty} \mathcal{R}_u(t)$.

\begin{remark}
Because $r(t)$ is bounded ($r(t) \in \mathcal{E}_r(\Pi)$) and $\zeta(t)$ is bounded ($\zeta(t) \in \mathcal{R}_\zeta(\infty)$) if condition (ii) in Assumption~1 holds, then the asymptotic stealthy input reachable set $\mathcal{R}_u(\infty)$ exists.
\end{remark}

In this manuscript, we propose to rely on an outer ellipsoidal approximation $\mathcal{E}_u(R)$ of $\mathcal{R}_u(\infty)$, i.e. $\mathcal{R}_u(\infty) \subseteq \mathcal{E}_u(R)$. For the sake of clarity, we will refer $\mathcal{E}_u(R)$ as an ellipsoidal bound on $\mathcal{R}_u(\infty)$, with $R \succ 0$.

Before presenting the proposed approach, we model the safe set $\mathcal{X}_s$ defined in Definition~\ref{def:safe_set} as an ellipsoid $\mathcal{E}_s(\Psi, \bar{\psi})$ written in terms of the extended state $\zeta$ and satisfying:
\begin{align}
(\zeta - \bar{\psi})^\top \Psi (\zeta - \bar{\psi}) \le 1
\end{align}
with
\begin{align}
\Psi = \begin{bmatrix} \Psi_p \, & \mathbf{0}\\
					   \mathbf{0} \, & \mathbf{0}
	   \end{bmatrix}, \,\,\,
\bar{\psi} = \begin{bmatrix} \bar{\psi}_p\\
							\mathbf{0}
			 \end{bmatrix}
\end{align}
for some known positive semi-definite matrix $\Psi_p \in \mathbb{R}^{n_p \times n_p}$ and vector $\bar{\psi}_p \in \mathbb{R}^{n_p}$. Note that $\Psi_p$ is in general rank-deficient as it only constrains some of the plant states $x_p(t)$ --  $\mathcal{E}_s(\Psi, \bar{\psi})$ can even coincide with $\mathbb{R}^{n \times n}$ by picking $\Psi_p = \mathbf{0}$.

\subsection{Approach}
To solve Problem~1 (see Section~\ref{sec:pb}), we propose a two-step procedure: first we compute an ellipsoidal bound $\mathcal{E}_\zeta(Q)$ on the asymptotic stealthy state reachable set of the closed-loop system \eqref{eq:closed_loop} for some states $\zeta(t)$ constrained to remain inside the safe set $\mathcal{E}_s(\Psi, \bar{\psi})$. This ellipsoidal bound describes where the state trajectories $\zeta(t)$ will remain at all times $t \ge t_0$ for some initial conditions $\zeta(t_0) \in \mathcal{E}_\zeta(Q)$ under the presence of stealthy sensor attacks while some parts of the states $\zeta$ are constrained by the safe set $\mathcal{E}_s$. Once this state ellipsoidal bound $\mathcal{E}_\zeta(Q)$ is computed, we compute the input ellipsoidal bound $\mathcal{E}_u(R)$ on $\mathcal{R}_u(\infty)$ when the states $\zeta(t)$ belong to the ellipsoidal bound $\mathcal{E}_\zeta(Q)$ and the residuals $r(t)$ belong to the stealthy set $\mathcal{E}_r(\Pi)$.

If such an input ellipsoidal bound  $\mathcal{E}_u(R)$ can be computed, we can use it to check whether the input signal $u(t)$ is in it. If $u(t)$ does not belong to this set, then it means that the safety of the plant will be violated at some time $t \ge t_0$. Indeed, we only have an outer-approximation of the asymptotic stealthy state reachable set $\mathcal{R}_\zeta(\infty)$ as its exact computation is not tractable. As a result, the corresponding input set $\mathcal{E}_u(R)$ is also an outer-approximation of the asymptotic input reachable set. 

\subsection{Main theorem} \label{sec:result}
The main result of this manuscript is the computation of an input ellipsoidal bound $\mathcal{E}_u(R)$ on $\mathcal{R}_\zeta(\infty)$ of the closed-loop system \eqref{eq:closed_loop} under the presence of stealthy sensor injection attacks $\delta y(t)$ when the state trajectories $\zeta(t)$ are bounded by the state ellipsoidal bound $\mathcal{E}_\zeta(Q)$ at all time $t \ge t_0$, which represents where the state $\zeta(t)$ will remain at all time $t \ge t_0$ when they are constrained by the safe set under the presence of any stealthy sensor injection attacks.

The problem we want to solve is formulated as follows. We want to find an input ellipsoidal bound $\mathcal{E}_u(R)$ such that $u(t)$ satisfy the equation \eqref{eq:closed_loop_input} for all $\zeta(t), \, r(t)$ satisfying $\zeta(t) \in \mathcal{E}_\zeta(Q)$ and $r(t) \in \mathcal{E}_r(\Pi)$ at all time $t \ge t_0$. This means that if the input signal $u(t)$ does not belong to $\mathcal{E}_u(R,\bar{u})$, thus the input signal affected by a stealthy sensor injection attack is violating at some time $t$ at least one constraint, i.e. the state trajectories $\zeta(t)$ are constrained to remain inside the safe set.

\begin{theorem}[Input ellipsoidal bound $\mathcal{E}_u(R)$]
Consider the \linebreak
closed-loop system as in \eqref{eq:closed_loop}, an invariant ellipsoidal set $\mathcal{E}_\zeta(Q)$ with $Q \succ 0$ for the system in \eqref{eq:closed_loop},  and the stealthy set $\mathcal{E}_r(\Pi)$. If there exist matrix $R \in \mathbb{R}^{m \times m}$ and constants $\gamma, \, \tau \in \mathbb{R}_{\geq 0}$ satisfying the following inequalities:
\begin{equation} \label{eq:thun3}
-W - \gamma Y - \tau Z \succeq 0,
\end{equation}
\begin{equation} \label{eq:thun4}
R  \succ 0
\end{equation}
with
\begin{align}
W = &\begin{bmatrix}
E^\top R E^\top\,  & \mathbf{0}\, & E^\top R F\\
\ast\, &  -1\, & \mathbf{0}\\
\ast\, &  \ast\, & F^\top R F
\end{bmatrix},\\
Y = &\begin{bmatrix}
-Q\, & \mathbf{0}\,  & \mathbf{0}\\
\ast\, & 1\, & \mathbf{0}\\
\ast\, & \ast\, & \mathbf{0}
\end{bmatrix},\\
Z = &\begin{bmatrix}
\mathbf{0}\, & \mathbf{0}\,  & \mathbf{0}\\
\ast\, & 1\, & \mathbf{0}\\
\ast\, & \ast\, & -\Pi
\end{bmatrix}
\end{align}
\noindent then,
$u(t_0)^\top R u(t_0) \leqslant 1 \Rightarrow
\zeta(t) ^\top Q \zeta(t) \leqslant 1, r(t)^\top \Pi r(t) \leqslant 1$, for all $t \ge t_0$
\label{th:input}
\end{theorem}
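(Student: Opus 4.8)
The plan is to reuse the S-procedure mechanics of Lemma~\ref{lem:invar}, now applied to the augmented form built from $W$, $Y$, $Z$, and then to turn the resulting pointwise algebraic certificate into an all-time statement by invoking the standing invariance hypothesis on $\mathcal{E}_\zeta(Q)$ together with the stealthiness constraint on $r$. First I would introduce the augmented vector $\xi(t) := [\zeta(t)^\top,\, 1,\, r(t)^\top]^\top$ and evaluate the three quadratic forms. Using the output relation \eqref{eq:closed_loop_input}, $u = E\zeta + Fr$, a direct expansion gives
\begin{align}
\xi^\top W \xi &= (E\zeta + Fr)^\top R (E\zeta + Fr) - 1 = u^\top R u - 1, \nonumber \\
\xi^\top Y \xi &= 1 - \zeta^\top Q \zeta, \qquad \xi^\top Z \xi = 1 - r^\top \Pi r. \nonumber
\end{align}
(Here the top-left block of $W$ must read $E^\top R E$ for the dimensions to be consistent, which I would state explicitly.) Thus the three blocks encode precisely the membership indicators of $\mathcal{E}_u(R)$, $\mathcal{E}_\zeta(Q)$ and $\mathcal{E}_r(\Pi)$.

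Next I would left- and right-multiply the LMI \eqref{eq:thun3} by $\xi^\top$ and $\xi$. Since $-W - \gamma Y - \tau Z \succeq 0$, this yields, for every $\zeta$ and $r$,
\begin{align}
1 - u^\top R u \;\geq\; \gamma\,(1 - \zeta^\top Q \zeta) + \tau\,(1 - r^\top \Pi r), \nonumber
\end{align}
with $\gamma,\tau \ge 0$ acting as S-procedure multipliers. This single inequality is the algebraic certificate delivered by feasibility of \eqref{eq:thun3}--\eqref{eq:thun4}, and everything else in the proof should be extracted from it evaluated along trajectories.

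Then I would assemble the conclusion in two parts. The residual bound $r(t)^\top \Pi r(t) \le 1$ for all $t \ge t_0$ holds by the very definition of a stealthy attack (Definition~\ref{def:sensor_attack} and \eqref{eq:stealthy}), so that half of the conclusion requires no further work. For the state bound I would lean on the hypothesis that $\mathcal{E}_\zeta(Q)$ is invariant for \eqref{eq:closed_loop}: by invariance, once $\zeta(t_0) \in \mathcal{E}_\zeta(Q)$ and $r(t) \in \mathcal{E}_r(\Pi)$ for all $t$, the trajectory satisfies $\zeta(t) \in \mathcal{E}_\zeta(Q)$ for all $t \ge t_0$. Hence the whole statement collapses to establishing the initial-time link $u(t_0) \in \mathcal{E}_u(R) \Rightarrow \zeta(t_0) \in \mathcal{E}_\zeta(Q)$, after which invariance and stealthiness propagate it forward in time.

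I expect this initial-time link to be the main obstacle. Read in the standard orientation, the certificate above is a pointwise inequality in $(\zeta,r)$ and, via the S-procedure, it certifies the converse implication $\{\zeta \in \mathcal{E}_\zeta(Q)\} \cap \{r \in \mathcal{E}_r(\Pi)\} \Rightarrow \{u \in \mathcal{E}_u(R)\}$; because $u = E\zeta + Fr$ with $E$ generally non-invertible, membership of $u(t_0)$ in $\mathcal{E}_u(R)$ does not by itself pin down $\zeta(t_0)$. To close the argument as stated I would therefore need either an additional rank/left-invertibility condition on $E$ so that $u(t_0) \in \mathcal{E}_u(R)$ genuinely constrains $\zeta(t_0)$, or — the route I find cleaner and consistent with the signs of $W$, $Y$, $Z$ in \eqref{eq:thun3} — to read the implication through its contrapositive, which is exactly the monitoring form required by Problem~1. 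Pinning down that orientation, and verifying it against the S-procedure multipliers $\gamma,\tau$, is where I would concentrate the care.
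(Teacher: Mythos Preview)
Your approach is exactly the paper's: form $\xi=[\zeta^\top,1,r^\top]^\top$, expand $\xi^\top W\xi$, $\xi^\top Y\xi$, $\xi^\top Z\xi$, and read \eqref{eq:thun3} as an S-procedure certificate with multipliers $\gamma,\tau\ge 0$. Your correction of the top-left block of $W$ to $E^\top R E$ is also right.

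On the point you flag as the main obstacle, you have in fact diagnosed the situation precisely, and the paper does not resolve it differently. The paper's own proof left- and right-multiplies \eqref{eq:thun3} by $\xi$ and concludes, verbatim, that ``the input trajectories remain inside the input ellipsoidal bound $\mathcal{E}_u(R)$ for all (i) plant state trajectories inside the invariant ellipsoidal set $\mathcal{E}_\zeta(Q)$ and (ii) residual trajectories $r(t)$ inside the stealthy set $\mathcal{E}_r(\Pi)$.'' In other words, the proof establishes exactly the implication you wrote down,
\[
\zeta\in\mathcal{E}_\zeta(Q)\ \text{and}\ r\in\mathcal{E}_r(\Pi)\ \Longrightarrow\ u\in\mathcal{E}_u(R),
\]
whose contrapositive is the monitoring statement of Problem~1. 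The implication displayed in the theorem's conclusion, $u(t_0)\in\mathcal{E}_u(R)\Rightarrow \zeta(t)\in\mathcal{E}_\zeta(Q),\,r(t)\in\mathcal{E}_r(\Pi)$, is not what the S-procedure delivers and is not what the paper's proof argues; your instinct to read it via the contrapositive is the intended orientation. So there is no missing idea on your side: your derivation matches the paper's, and the ``obstacle'' you isolate is a mismatch between the theorem's stated conclusion and what both you and the paper actually prove.
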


\begin{proof}
Consider \eqref{eq:thun3}; left and right multiply by $[\zeta(t)^\top, 1,$ $r(t)^\top]^\top$, and consider $\gamma Y$ and $\tau Z$ as S-procedure terms by positive multipliers $\gamma$, and $\tau$; this implies with the S-procedure:

\begin{align*}
&-[\zeta^\top, 1, r^\top] \,W\, [\zeta^\top, 1, r^\top]^\top \ge 0 \Leftrightarrow u(t) \in \mathcal{E}_u(R)\\
&\text{when}\\
&[\zeta^\top, 1, r^\top] \, Y \, [\zeta^\top, 1, r^\top]^\top \ge 0 \Leftrightarrow \zeta(t) \in \mathcal{E}_\zeta(Q),\\
&[\zeta^\top, 1, r^\top] \, Z \, [\zeta^\top, 1, r^\top]^\top \ge 0 \Leftrightarrow r(t) \in \mathcal{E}_r(\Pi)
\end{align*}
This means that the input trajectories remain inside the input ellipsoidal bound $\mathcal{E}_u(R)$ for all (i) plant state trajectories inside the invariant ellipsoidal set $\mathcal{E}_\zeta(Q)$ and (ii) residual trajectories $r(t)$ inside the stealthy set $\mathcal{E}_r( \Pi).$
\end{proof}

\subsection{Computation of the input ellipsoidal bound $\mathcal{E}_u(R)$}
To compute the input ellipsoidal bound $\mathcal{E}_u(R)$, first we have to compute the state ellipsoidal bound $\mathcal{E}_\zeta(Q)$ using Lemma~\ref{lem:invar}. Due to the product of $\alpha$ with $J$, the matrix inequality \eqref{eq:thun1} in Lemma~\ref{lem:invar} is not an LMI. To relax it, we will compute the state ellipsoidal bound $\mathcal{E}_\zeta(Q)$ for a fixed $\alpha$. Because we compute an outer-approximation of the stealthy state reachable set $\mathcal{R}_\zeta(\infty)$, we want to find the one with the lowest volume to have the best approximation. Since the volume of an ellipsoid $\mathcal{E}_\varphi(\Phi)$ is proportional to (det($ \Phi$)$^{-1/2}$, we can minimize (log(det($ \Phi$))$^{-1}$ to minimize the volume of the ellipsoid, which is convex and allows to cast the problem as a convex optimization problem (\cite{boyd1994linear}). By solving the convex optimization problem \textbf{OP$_1$} for a fixed $\alpha$  we find a state ellipsoidal bound $\mathcal{E}_\zeta(Q)$ with the lowest volume for the fixed parameter.

\noindent \textbf{OP$_1$: \quad State ellipsoidal bound $\mathcal{E}_\zeta(Q)$}
  \begin{align*}
    \underset{Q, \beta, \lambda}{\text{minimize}} & \qquad \textrm{-log(det(}Q\textrm{))},\\
  \text{subject to}
& \qquad \eqref{eq:thun1}, \eqref{eq:thun2}.
  \end{align*}

After having found the state ellipsoidal bound $\mathcal{E}_\zeta(Q)$, we can now compute the input ellipsoidal bound $\mathcal{E}_u(R)$ using Theorem~\ref{th:input} by solving the convex optimization problem \textbf{OP$_2$}. Similarly, we want to find an input ellipsoidal bound with the lowest volume.

\noindent \textbf{OP$_2$: \quad Input ellipsoidal bound $\mathcal{E}_u(R)$}
  \begin{align*}
    \underset{R, \gamma, \tau}{\text{minimize}} & \qquad \textrm{-log(det(}R\textrm{))},\\
  \text{subject to}
& \qquad \eqref{eq:thun3}, \eqref{eq:thun4}.
  \end{align*}

The procedure to find an input ellipsoidal bound $\mathcal{E}_u(R)$ answering Problem~1 is summarized in Algorithm~1.

\begin{algorithm2e}
  \SetAlgoLined
  \KwResult{Input ellipsoidal bound $\mathcal{E}_u(R)$}
  Init: closed-loop system ($A$, $B$, $E$, $F$), stealthy set ($\mathcal{E}_r(\Pi)$), safe set ($\Psi$, $\bar{\psi}$), attacker's selection matrix ($\Gamma$)\;
  1) Find the state ellipsoidal bound $\mathcal{E}_\zeta (Q)$ by solving \textbf{OP$_1$} for some $\alpha \ge 0$\;
  2) Find the input ellipsoidal bound $\mathcal{E}_u (R)$ by solving \textbf{OP$_2$}\;
  \caption{Compute input ellipsoidal bound $\mathcal{E}_u(R)$}
\end{algorithm2e}

\section{Simulation example} \label{sec:example}
We verify our main result in simulations on a three-tank system. After stating the models of the three-tank system, the controller and the fault-detector, we first describe the attack scenario against the system. Then, we apply our main results using Algorithm~1. Lastly, we show that the proposed approach enables the detection of some stealthy attacks and we highlight the limitations of using reachable sets for detection. We use the solver MOSEK with the YALMIP toolbox on Matlab to solve the optimization problems, and we use Simulink to inject attacks on the system and test the proposed approach.

\subsection{System description}
Consider the three-tank system from \cite{Iqbal_2007} modelled in \eqref{eq:system_example} as an LTI system as in \eqref{eq:plant} with $x_p = [x_{p1}, x_{p2}, x_{p3}]^\top$ ($n_p = 3$) and $u = [u_1, u_2]^\top$ ($m = 2$) where $x_{p}$ is the tank's liquid level [\textrm{cm}], and $u$ is the supply flow rates [\textrm{mL.s$^{-1}$}]. The plant states  $x_{p1}$ and $x_{p2}$ are measured and encompassed in the sensor measurement $y = [y_1, y_2]^T$ ($l = 2$).

\begin{align}\label{eq:system_example}
A_p &= 10^{-4} \times
\begin{bmatrix}
-1.36\, & 0\, & 0.72\\
0\, & -2.29\, & 15.30\\
1.36\, & -1.11\, & -16.02
\end{bmatrix}
,
B_p =
\begin{bmatrix}
64.94\, & 0\\
0\, & 64.94\\
0\, & 0
\end{bmatrix}
\nonumber
\\
C_p &=
\begin{bmatrix}
1\, & 0\, & 0\\
0\, & 1\, & 0
\end{bmatrix}
,
\,
D_p = \begin{bmatrix}
0\, & 0\\
0\, & 0
\end{bmatrix}.
\end{align}
 An observer-based output feedback controller as in \eqref{eq:control} is considered in  closed-loop with controller matrices given in \eqref{eq:controller_example} with $n_c = 3$.
 
 \begin{align}\label{eq:controller_example}
A_c &= 
\begin{bmatrix}
-0.33\, & 0.09\, & -122.34\\
0.10\, & -0.33\, & 114.69\\
-0.01\, & -0.16\, & -0.002
\end{bmatrix}
,
B_c = 10^{-2} \times
\begin{bmatrix}
2.00\, & 0.13\\
0.19\, & 3.80\\
1.49\, & 15.49
\end{bmatrix}
\nonumber
\\
C_c &= 10^{-2} \times
\begin{bmatrix}
-0.47\, & 0.14\, & -188.41\\
0.16\, & -0.45\, & 176.62
\end{bmatrix}
,
\,
D_c = \begin{bmatrix}
0\, & 0\\
0\, & 0
\end{bmatrix}.
\end{align}

A fault detector is implemented as in Fig.~\ref{fig:setup} with gain matrix $L$ given below in \eqref{eq:gain_example}. The fault detector raises an alarm when $r(t) \notin \mathcal{E}_r(\Pi)$ with $\Pi = \text{diag(}1,1\text{)}$.
\begin{equation} \label{eq:gain_example}
L = 10^{-3} \times
\begin{bmatrix}
20.01\, & 1.31\\
1.92\, & 38.02\\
14.93\, & 154.94
\end{bmatrix}
\end{equation}
Notice that the gain matrix $L$ is chosen intentionally here to admit a large class of stealthy attack signals.

The three-tank system operates safely when the states remains inside the safe set $\mathcal{E}_s(\Psi,\bar{\psi})$ defined for $\Psi_p = \text{diag(}10^{-4}, 10, 10^{-4}\text{)}$, and $\bar{\psi_p} = \mathbf{0}$. This means that the plant state $x_{p2}$ is constrained whereas $x_{p1}$ and $x_{p3}$ as the corresponding terms in $\Psi_p$ tend to zero.

\subsection{Attack scenario}
We consider here the worst-case scenario where an adversary is capable of compromising both output measurements $y_1(t)$ and $y_2(t)$ ($s = 2$) by injecting two attack signals $\delta y_1$ and $\delta y_2$ added to the measurements, i.e., $\Gamma = I_2$.

\subsection{Safety monitoring}
Here we want to find an input set $\mathcal{U}$ as a solution to Problem~1 using Algorithm~1. First, we compute the state ellipsoidal bound for $\alpha = 30$. The projection of the state ellipsoidal bound $\mathcal{E}_\zeta(Q)$ onto the $x_p$-hyperplane is drawn on the left-hand side in Figure~\ref{fig:main_result} (green fill) together with the safe set $\mathcal{E}_s$ (blue fill). As we can expect, the plant state $x_{p2}$ is constrained whereas $x_{p1}$ and $x_{p3}$ are not (they can reach almost the entire state space). Then, we compute the input ellipsoidal bound $\mathcal{E}_u(R)$ where the resulting $R$ is given in \eqref{eq:R_result}. The input ellipsoidal bound is drawn and zoomed in on the right-hand side in Figure~\ref{fig:main_result} (red fill).
\begin{equation} \label{eq:R_result}
R = 10^5 \times
\begin{bmatrix} 3.99\, & 4.26\\
			     4.26\, & 4.55
			     \end{bmatrix}
\end{equation}

\begin{figure*}[h]
    \centering
    \begin{subfigure}[b]{0.9\textwidth} 
    	\includegraphics[width=0.5\textwidth]{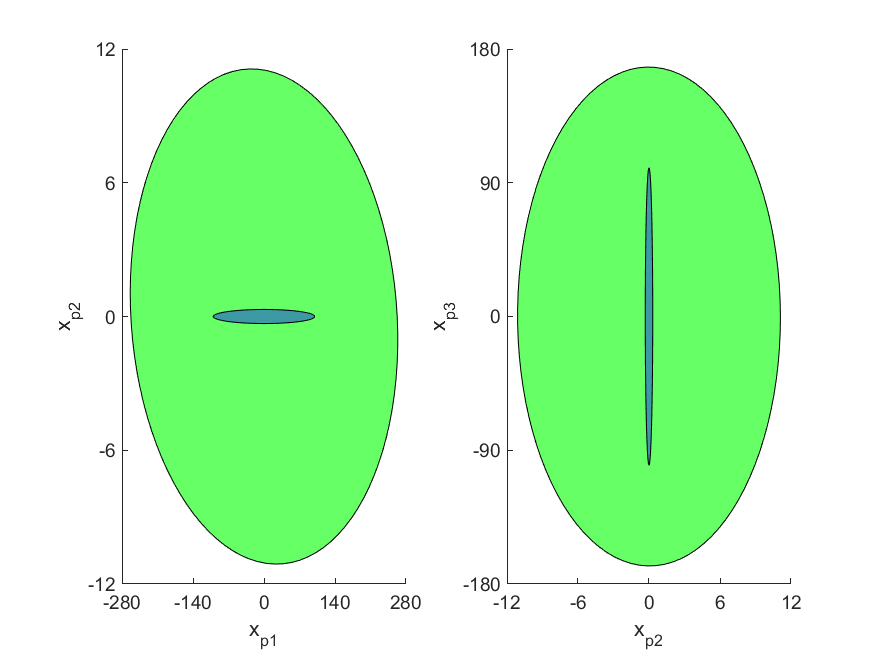}
    	\includegraphics[width=0.5\textwidth]{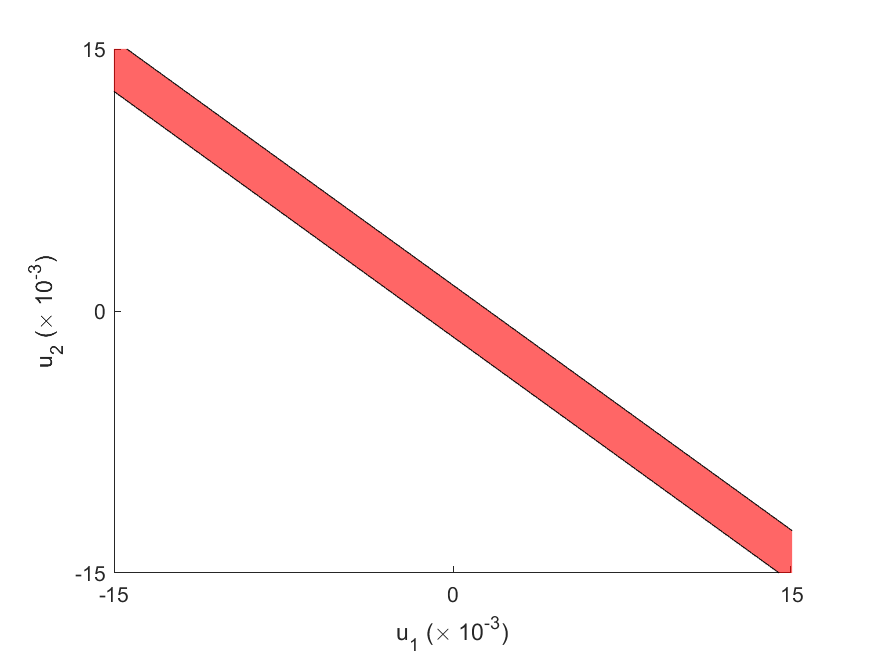}
    \end{subfigure}
    \caption{(Left-hand side) Projection of the state ellipsoidal bound $\mathcal{E}_\zeta(Q)$ onto the $x_p$-hyperplane (green fill), safe set $\mathcal{E}_s(\Psi, \bar{\psi})$ (blue fill) - (Right-hand side) Input ellipsoidal bound $\mathcal{E}_u(R)$ zoomed in (red fill)}\label{fig:main_result}
\end{figure*}

After having found the input ellipsoidal bound $\mathcal{E}_u(R)$, we test the proposed detection scheme based on the input ellipsoidal bound to verify if we can detect stealthy sensor injection attacks by checking whether the control input $u(t)$ belongs to the input ellipsoidal bound or not. Figure~\ref{fig:sim_result} shows the simulation result with, from the top to the bottom on the left-hand side: the plant states $x_p(t)$, a safety check (takes $1$ if the plant states do not belong to the safe set; otherwise $0$), the control input $u(t)$, and the detection result (returns $1$ if the control input do not belong to the input ellipsoidal bound $\mathcal{E}_u(R)$; otherwise $0$). In Figure~\ref{fig:sim_result}, from the top to the bottom on the right-hand side, the attack signals $\delta y(t)$, the residuals $r(t)$, and a stealthiness check (returns $1$ if the residuals belong to the stealthy set; otherwise $0$) are drawn. Three attacks have been launched on the closed-loop system : first attack starts at 10 \textrm{s} and ends at 110 \textrm{s}, second attack starts at 175 \textrm{s} and ends at 275 \textrm{s}, and last attack starts at 340 \textrm{s} and ends at 440 \textrm{s}. As we can observe, the attack signals $\delta y$ are carefully choosen to keep the fault detector below its detection threshold, i.e. the residualts $r(t) \in \mathcal{E}_s(\Pi)$ (see the stealthiness check in Figure~\ref{fig:sim_result}). Each attack violates the safety of the plant at some time $t$. For each attack, the safety is first violated respectively at 42.3 \textrm{s}, 188.1 \textrm{s}, and 388.5 \textrm{s}. As we can pinpoint, the attack signals $\delta y(t)$ affect the control input $u(t)$ to compromise the integrity of the plant. For each attack, the detection occurs at 59.8 \textrm{s}, 218.4 \textrm{s}, and no detection is triggered for the last attack.

First, we can see that the proposed approach succeeded in revealing the presence of stealthy sensor injection attacks by monitoring only the control input $u(t)$ for the first two attacks. However, it does not detect the presence of the last attack. For this last attack, we can observe that the control input $u(t)$ and the plant state $x_p(t)$ are very slightly affected by the attack signals which allow the attack to escape the detection. Second, the time before detection that is the delay of detection after the occurrence of the safety violation can be studied. Here, the detection occurs after 16.5 \textrm{s} for the first attack and 30.3 \textrm{s} for the second attack. Both statements are likely because the proposed approach is based on ellipsoidal bounds of the reachable sets, meaning that there exists an approximation error. So, at the time before the safety violation, the control input affected by the attacks are most likely inside  $\mathcal{R}_u(\infty)$, and thus inside the input ellipsoidal bound $\mathcal{E}_u(R)$. Before the safety violation, the control input is outside $\mathcal{R}_u(\infty)$, but inside $\mathcal{E}_u(R)$. This means that the control input will drive the plant outside the safe set but it cannot be detected by the proposed approach. For the last attack, the control input is most likely operating in this zone, i.e. $u(t) \notin \mathcal{R}_u(\infty)$ and $u(t) \in \mathcal{E}_u(R)$. However, for the first two attacks, the control input at the time detection crosses the boundary of the input ellipsoidal bound leading to the detection. This limitation could be overcome first by using other sets than ellipsoidal sets that often lead to a large approximation error, and second by considering the time evolution of the reachable set instead of using the asymptotic reachable set.

\begin{figure*}[h]
    \centering
    \begin{subfigure}[b]{\textwidth} 
  	 \includegraphics[width=0.5\textwidth, trim = 0.9cm 0cm 1.1cm 0cm, clip]{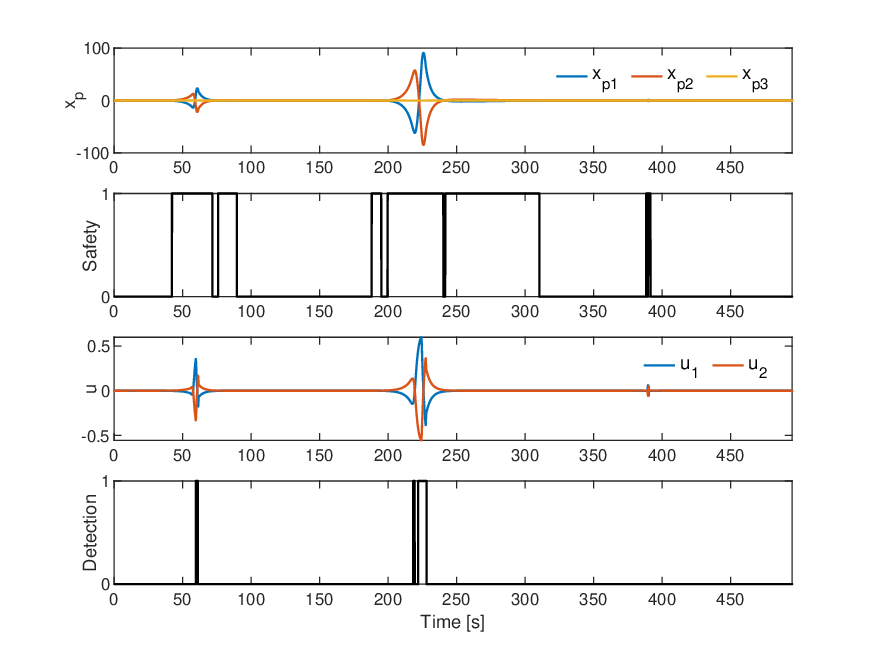}
        \includegraphics[width=0.5\textwidth, trim = 1cm 0cm 1.1cm 0cm, clip]{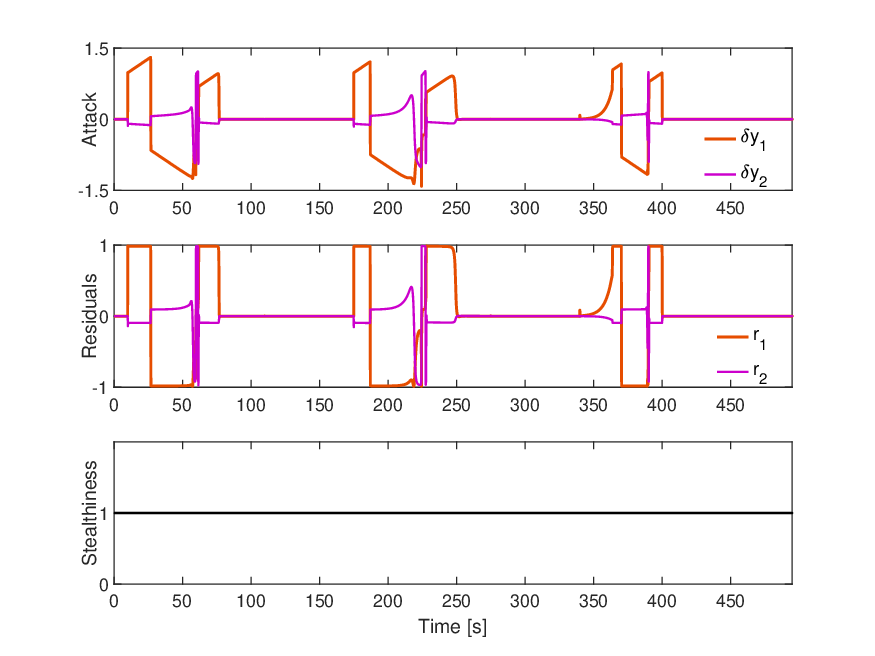}
    \end{subfigure}

    \caption{Stealthy sensor injection attacks $\delta y$ on the closed-loop system: plant states $x_p(t)$ and safety check (top left corner), control input $u(t)$ and detection result (bottom left corner), attack signals $\delta y(t)$ (top right corner), residuals $r(t)$ and stealthiness check (bottom right corner)}\label{fig:sim_result}
\end{figure*}

\section{Conclusion}
We have proposed a set-theoretic method to design a detection scheme to reveal the presence of stealthy sensor injection attacks by only monitoring the control input to the plant. The novelty of this work lies in monitoring the control input to reveal attacks evading traditional fault-detectors while pushing the plant into critical states. Some limitations have been highlighted that will be the subject of future research works. First, we will explore the possibility of computing an input ellipsoidal bound that will evolve with time to cope with the main limitation of the approach. Second, sets other than ellipsoidal sets will be considered to reduce the error of the outer-approximation of the reachable sets.

\bibliography{michellebib}

\end{document}